\documentclass[letterpaper, 10 pt, conference]{ieeeconf}  

\IEEEoverridecommandlockouts                              
  \usepackage{amsmath,amsfonts}
\usepackage{amssymb}
\usepackage{algorithmic}
\usepackage{algorithm}
\usepackage{array}
\usepackage[caption=false,font=normalsize,labelfont=sf,textfont=sf]{subfig}
\usepackage{textcomp}
\usepackage{stfloats}
\usepackage{url}
\usepackage{verbatim}
\usepackage{graphicx}
\usepackage{cite}
\usepackage{color}
\usepackage{float}

\usepackage{tikz}

\usepackage[T1]{fontenc}

\graphicspath{ {fig} }
\newtheorem{definition}{\textit{\bf Definition}}
\newtheorem{remark}{\textit{\bf Remark}}

\newtheorem{lemma}{\textit{\bf Lemma}}

\newtheorem{theorem}{\textbf{\textit{Theorem}}}

\newtheorem{problem}{\textit{\bf Problem}}
\def\BibTeX{{\rm B\kern-.05em{\sc i\kern-.025em b}\kern-.08em
    T\kern-.1667em\lower.7ex\hbox{E}\kern-.125emX}}                                                        

\title{\LARGE \bf
Traffic Characterization of Event-Triggered Control Systems: \\A Geometric-Algebraic Perspective
\thanks{
	\textcopyright~2026 IEEE. Personal use of this material is permitted. Permission from IEEE must be obtained for all other uses, in any current or future media, including reprinting/republishing this material for advertising or promotional purposes, creating new collective works, for resale or redistribution to servers or lists, or reuse of any copyrighted component of this work in other works.}
}

\author{Tao Chen, Hongju Wang and Wenfeng Hu$^{*}$
\thanks{*This work was supported in part by the National Natural Science Foundation of China under Grant 62273358; in part by the Natural Science Foundation of Hunan Province under Grant 2023JJ20075. \textit{(Corresponding author: Wenfeng Hu.)}}
\thanks{Tao Chen, Hongju Wang and Wenfeng Hu are with the School of Automation, Central South University, Changsha,  410000, China(e-mail: tchen@csu.edu.cn; wenfenghu@csu.edu.cn; wanghongju@csu.edu.cn).
        }%
}

\begin{document}

\maketitle
\begin{tikzpicture}[remember picture,overlay]
	\node[anchor=north, yshift=-30pt] at (current page.north)
	{\footnotesize\textit{Accepted version of a paper accepted by the 2026 American Control Conference (ACC 2026)}};
\end{tikzpicture}

\thispagestyle{empty}
\pagestyle{empty}

\begin{abstract}
This paper characterizes the triggering behaviors of event-triggered control systems from a geometric–algebraic perspective. We first model the feasibility of inter-event time transition relations as a nonconvex quadratic constraint satisfaction problem and reformulate it as an equivalent linear cone problem, which provides a clearer geometric description of the feasible region, making subsequent analysis more reliable. Building on this formulation, we establish necessary and sufficient conditions that rigorously determine whether a given transition relation is feasible. Based on this condition, we propose an algorithm that computes the set of all feasible transition relations. Numerical simulations further demonstrate how the feasibility of specific transitions evolves with the control parameter~$\sigma$, with visualizations of the feasible state space offering intuitive insight into parameter selection and system design.
\end{abstract}

\section{INTRODUCTION}

    In modern networked control systems (NCSs), event-triggered control (ETC) \cite{Tabuada2007Event,Dimos2012Distributed, Heemels2013Periodic} has become a key paradigm for reducing communication burden while ensuring control performance. Unlike periodic sampling, ETC transmits information only when required, thereby achieving substantial savings in network usage. However, this advantage comes with an inherent trade-off. A loose triggering condition reduces communication but may lead to performance degradation, while a tight condition preserves performance but risks overloading the network \cite{chen2020often}. This essential trade-off makes it essential to analyze the communication traffic characterization generated by ETC, as it provides deeper insight into the network usage in such systems.

    Existing research has made substantial progress in designing triggering conditions with two main objectives. One is guaranteeing stability, such as asymptotic or input-to-state stability. The other is ensuring a strictly positive minimum inter-event time (MIET) to exclude Zeno behavior \cite{girard2014dynamic, Hu2020DistributedEvent}. These studies reveal the fundamental trade-off between control performance and communication resources, treating the triggering condition as the main design object. The focus of this paper is different. We study the triggering behaviors themselves, characterized by inter-event times (IETs) and their transition relations, such as $\{t_1 \to t_2, t_1 \to t_5, t_2 \to t_9,\cdots \}$, which directly determine the traffic characterization in event-triggered control systems.

	  Existing studies on triggering behaviors can be divided into two categories. Some works focus on IETs, analyzing their steady-state behavior\cite{Heemels2023ExplainingMystery}. Other works go further to study transition relations among IETs, which describe all possible triggering behaviors. Seminal work in this direction has been pursued through abstraction-based methods. Manuel et al.~\cite{Manuel2018FormalTrafficLTI} developed finite-state abstractions to model the triggering behavior of linear event-triggered systems, later extended to output-feedback~\cite{Manuel2018TrafficPertubate}, nonlinear~\cite{Manuel2022TrafficNonlinear}, and stochastic settings~\cite{Manuel2023FormalStochastic}. These abstractions construct traffic models that represent feasible IETs and their transition relations as a finite-state transition system. While such models have proven useful, they almost always rely on approximations or relaxed numerical procedures during modeling or computation. Consequently, the resulting transition relations may include behaviors that do not actually exist in the original system. More importantly, these methods cannot explain why some transition relations occur, whereas others do not.
	
	To address this gap, this paper focuses on characterizing triggering behaviors of event-triggered systems from a geometric–algebraic perspective. The triggering behaviors are modeled as a nonconvex quadratic constraint satisfaction problem, where the feasibility of a transition relation $t_i \to t_j$ is constrained by quadratic inequalities on the initial state. Then, we convert the original nonconvex quadratic problem into an equivalent linear cone problem, which provides a geometric formulation of the feasible region and prepares the ground for further analysis.
    Based on this cone formulation, we derive an equivalent algebraic condition that converts the geometric condition to an algebraic condition for determining whether a given transition relation $t_i \to t_j$ can occur. Then, based on this condition, we propose an algorithm that determines the set of all feasible transition relations.
    Finally, we validate the method through numerical simulations, where we characterize and visualize how the control parameter $\sigma$ influences triggering behaviors by analyzing the evolution of the feasible state space.
	
\section{Preliminaries And Problem Statement}

 {\bf{Notations}}:  
     Let $\mathbb{R}$, $\mathbb{R}^{+}$, $\mathbb{N}$ denote the sets of real numbers, strictly positive real numbers and natural numbers, respectively. Let $I$ denote the identity matrix. The matrix norm is denoted by $\| \cdot \|$.
     For a vector $v \in \mathbb{R}^n$ and a square matrix $A \in \mathbb{R}^{n \times n}$, $\operatorname{diag}(v)$ denotes the $n \times n$ diagonal matrix with $v$ on its main diagonal and $\operatorname{diag}(A)$ denotes the vector in $\mathbb{R}^n$ whose $i$-th element is $A_{ii}$.

     For any vectors $x, y \in \mathbb{R}^n$, we define two operations. The inner product is defined as $\langle x, y \rangle = x^\top y$. The Hadamard product is denoted by $\circ$ and defined as $x \circ y = [x_1y_1,\dots,x_ny_n]^\top \in \mathbb{R}^n$, where $x,y \in \mathbb{R}^n$.

     The undirected Laplacian matrix is defined as $L$, which is symmetric and positive semi-definite. Its eigenvalues are real and can be ordered as $0 = \lambda_1 \leq \lambda_2 \leq \cdots \leq \lambda_n$. Then the matrix $L$ admits an eigenvalue decomposition $L = V \Lambda V^\top$, where $V$ is an orthogonal matrix satisfying $V^\top V = VV^\top = I_n$ and $\Lambda = \operatorname{diag}(\lambda_1, \lambda_2, \dots, \lambda_n)$.

	\subsection{Some Definitions}	
	\begin{definition}[Dual Cone and Polar Cone \cite{Rockafellar1970}]
		Let $E \subseteq \mathbb{R}^n$ be a cone. The \emph{dual cone} of $E$ is defined as
		$E^* := \left\{ y \in \mathbb{R}^n \,\middle|\, \langle y, x \rangle \geq 0,\ \forall x \in E \right\},$
		and the \emph{polar cone} of $E$ is defined as $E^\circ := \left\{ y \in \mathbb{R}^n \,\middle|\, \langle y, x \rangle \leq 0,\ \forall x \in E \right\}.$
	\end{definition}

	\begin{definition}[Interior, Closure, and Boundary \cite{Rockafellar1970}]
		Let $S \subseteq \mathbb{R}^n$ be a set. The \emph{interior} of $S$, denoted by $\operatorname{int}(S)$, is defined as $\operatorname{int}(S) := \left\{ x \in S \,\middle| \, \exists\, \epsilon > 0 \text{ such that } B(x, \epsilon) \subseteq S \right\}$, where $B(x, \epsilon) := \left\{ y \in \mathbb{R}^n \,\middle|\, \|y - x\| < \epsilon \right\}$ is the open ball centered at $x$ with radius $\epsilon$. The \emph{closure} of $S$, denoted by $\operatorname{cl}(S)$, is the smallest closed set containing $S$. The \emph{boundary} of $S$ is defined as
		$\partial S := \operatorname{cl}(S) \setminus \operatorname{int}(S).$
	\end{definition}

    \subsection{Problem Statement}
	Consider a linear system governed by
	\begin{align}
		\dot{x}(t) = -Lx(t), \label{eq:linearSys}
	\end{align}
	where $x(t) \in \mathbb{R}^n$ and $L \in \mathbb{R}^{n \times n}$ is symmetric and positive semi-definite. To reduce the communication burden, we adopt an ETC strategy, and the closed-loop system becomes
	\begin{align}\label{closedloopsys}
		\dot x(t) = -Lx(t_g), \quad \forall t \in [t_g, t_{g+1}),
	\end{align}
	where $t_g$ denotes the most recent triggering time, and the triggering sequence $\{t_0, t_1, \dots\} \subseteq \{h, 2h, \dots\}$ with sampling period $h > 0$.
	To determine the triggering times, define $\tau_k = t_q + kh$ for $k \in \mathbb{N}$. Then the next triggering time is
	\begin{align}\label{eq:triggering-time}
		t_{q+1} = \inf \left\{ \tau_k ~\middle|~ \phi(x(t_q), x(\tau_k), \sigma) > 0 \;\lor\; k = \bar{k} \right\},
	\end{align}
	where $\phi(\cdot)$ is the triggering function, $\sigma > 0$ is a control parameter, and $\bar{k}$ denotes the maximum allowable IET, ensuring that IETs remain bounded.
	
	Define the IET function as
	\begin{align}\label{Kx}
		\mathcal{K}(x,\sigma) := \frac{t_{q+1} - t_q}{h} \in \{1, 2, \dots, \bar{k}\}.
	\end{align}
    The triggering behavior over time is characterized by the IET sequence generated from~\eqref{Kx}, where each IET $k \in \mathcal{K}(x,\sigma)$ corresponds to an interval $kh$. To define the triggering bebaviors, we give the following definitions

    \begin{definition}[Transition Relation]
    A \emph{transition relation} $k_i \to k_j$ denotes the evolution from a current IET $k_i h$ to the next IET $k_j h$.
    \end{definition}
    \begin{definition}[Feasibility of Transition Relation]\label{Feasible}
    A transition relation $k_i \to k_j$ is said to be \emph{feasible} if $ \exists \;x(t_q) \in \mathbb{R}$ at a triggering instant $t_q$ such that $\mathcal{K}(x(t_q),\sigma)=k_i$ and $\mathcal{K}(x(t_q+k_1h),\sigma)=k_j$. Otherwise, the transition is said to be \emph{infeasible}. In this sense, the \emph{feasibility} of a triggering transition relation refers simply to whether it is classified as feasible or infeasible. 
    \end{definition}

    Given a maximum IET $\bar{k}h$, we first define the index set of all possible pairs as
    \begin{equation} \label{cl}
        \mathcal{D} = \{ (k_i, k_j) \mid k_i, k_j \in \{1,2,\dots,\bar{k}\} \}.
    \end{equation}
    Each element $(k_i,k_j) \in \mathcal{D}$ corresponds to a candidate transition relation $k_i \to k_j$. 
    Among these, the subset $\mathcal{B} \subseteq \mathcal{D}$ collects all feasible transition relations, which together characterize the triggering behaviors of the system.
	
	This work aims to analyze the triggering behaviors, focusing on the following two problems:
	\begin{problem}[Feasibility condition of Transition Relation]\label{prbfc}
    Given an event-triggered control system \eqref{closedloopsys} with triggering condition \eqref{eq:triggering-time}, determine necessary and sufficient conditions for the feasibility of IET transition relations from a geometric-algebraic perspective.
    \end{problem}	
    \begin{problem}[Algorithm For Triggering Behaviors]\label{prbtb}
     Compute the set of all feasible transitions $\mathcal{B}$.
    \end{problem}
The matrix $L$ can be viewed as the Laplacian matrix of a communication graph in a multi-agent system, where each agent updates its state based on neighboring information. In this setting, the studied problems correspond to understanding how event-triggered communication patterns emerge and how control parameters influence communication behaviors.

\section{A Geometric to Algebraic Conditions for Triggering Behavior Feasibility}
In this section, we characterize triggering behaviors from both geometric and algebraic perspectives. We begin by formulating the IET function, which provides a functional description of triggering. We then construct the model for triggering behaviors as a quadratic constraint satisfaction problem and reformulate it as a cone problem, yielding a geometric interpretation of feasibility. Finally, we establish an algebraic condition that serves as a computable condition for the feasibility of transition relations.
\subsection{IET Function Formulation}
We adopt the periodic event-triggering condition given by \cite{Tongwen2013Event}. At each sampling time, we check the triggering condition
	\begin{align}\label{triggercondi}
		\|e(\tau_k)\|^2 \leq \sigma \|Lx(\tau_k)\|^2,
	\end{align}
	where $k\in \mathbb{N}$, $e(\tau_k)=x(t_q)-x(\tau_k)$ is the error since the last control update, $ \sigma > 0 $ is a designed parameter, and $h$ is the sampling period. As shown in \cite{Tongwen2013Event}, the system achieves consensus if $0 < h \leq \frac{1}{2 \lambda_{n}}$ and $0 < \sigma <\frac{1}{ \lambda_{n}^2} $.

	It follows from \eqref{closedloopsys} that the state at $\tau_k$ can be given as
	\begin{align}
		x(\tau_k) = M(k)x(t_q), \label{evolutioin}
	\end{align}
	where $M(k)$ is called the transition matrix from $x(t_q)$ to $x(\tau_k)$, given by
	\begin{align}
		M(k) = I_n - khL. \label{Mk}
	\end{align}
	
	By noting that $e(\tau_k)=x(t_q)-x(\tau_k)$, we do the following formulation
	\begin{align*}
		\|e(\tau_k)\|^2 &\leq \sigma \|Lx(\tau_k)\|^2 \\
		\|x(t_q) - x(\tau_k)\|^2 &\leq \sigma \lambda_n^2 \|x(\tau_k)\|^2 \\
		(1-\sigma \lambda_n^2) x^T(\tau_k)x(\tau_k) &- 2x^T(t_q)x(\tau_k) + x^T(t_q)x(t_q) \leq 0 \ .
	\end{align*}
	To facilitate analysis, it is necessary to express the above inequality in the following quadratic form
	\begin{align}
	\begin{bmatrix} x(\tau_k) \\ x(t_q) \end{bmatrix}^T
	\begin{bmatrix} (1- \sigma  \lambda_{n}^2)I_n & -I_n \\ -I_n & I_n \end{bmatrix}
	\begin{bmatrix} x(\tau_k) \\ x(t_q) \end{bmatrix} \leq 0. \label{triggeredcondition}
	\end{align}

    With the above construction, the triggering condition \eqref{eq:triggering-time} is given by
	\begin{align}\label{triggertime}
		\begin{aligned}
				t_{q+1} = \inf \bigg\{&
			t = \tau_k, k \in \mathbb{N} \; \bigg| \;
			 \begin{bmatrix} x(\tau_k) \\ x(t_q) \end{bmatrix}^{\top}
			Q
			\begin{bmatrix} x(\tau_k) \\ x(t_q) \end{bmatrix} > 0 \\
			& \lor t - t_q \leq \bar{k}h \bigg\} ,
		\end{aligned}
	\end{align}
	where $\bar k$ is the chosen maximum IET, $\phi\big(x(t_q),x(\tau_k),\sigma\big) = \begin{bmatrix} x(\tau_k) \\ x(t_q) \end{bmatrix}^{\top}
			Q
			\begin{bmatrix} x(\tau_k) \\ x(t_q) \end{bmatrix}$, and $Q = \begin{bmatrix} (1- \sigma  \lambda_n^2)I_n & -I_n \\ -I_n & I_n \end{bmatrix} $.
	
	Finally, by combining \eqref{evolutioin} and \eqref{triggertime}, IET function \eqref{Kx} becomes
		\begin{align}\label{paraintereventtime}
		\begin{aligned}
			\mathcal{K}(x,\sigma)=\min  \big\{
			k \in \{1,...,\bar k\}
			\big |
			 x(t_q)^{\top}N(k)x(t_q)>0  \lor  k = \bar k\big\}
		\end{aligned}
	\end{align}
	where $N(k)$ is given by
	\begin{align}
		N(k)=\begin{bmatrix}M(k)\\I_n\end{bmatrix}^\top
		Q
		\begin{bmatrix}M(k)\\I_n\end{bmatrix}. \label{fdiNk}
	\end{align}
    
    \subsection{Representing Triggering Behaviors as a Cone Problem}
    
   According to Definition~\ref{Feasible}, the feasibility of a transition relation $k_1 \to k_2$ requires that there exists an initial state $x(t_q)$ such that the system first triggers at $k_1 h$ and subsequently at $k_2 h$. 
   This feasibility problem can be modeled as the following nonconvex quadratic constraint satisfaction problem (CSP) \cite{Manuel2020Scalable}
   \begin{equation} \label{nonconvex}
   	\begin{aligned}
   		&\exists x(t_q)\in\mathbb{R}^{n} \\
   		\text{s.t.} 
   		&x(t_q)^{\top}N(k_1)x(t_q) > 0, \\
   		&x(t_q)^{\top}N(j)x(t_q)\leq0,\forall j\in\{1,...,k_1-1\}, \\
   		&x(t_q)^{\top}M(k_{1})^{\top}N(k_{2})M(k_{1})x(t_q) > 0, \\
   		&x(t_q)^{\top}M(k_{1})^{\top}N(j)M(k_{1})x(t_q)\leq0,\\
   		&\forall j\in\{1,...,k_{2}-1\}.
   	\end{aligned}
   \end{equation}
    
    In this formulation, the existence of the transition $k_1 \to k_2$ is fully characterized by the existence of a state $x(t_q)$ satisfying the above constraints. If such a state exists, the transition relation is said to be feasible; otherwise, it is considered infeasible.
   \begin{remark}
       The core idea of formulation~\eqref{nonconvex} is derived through three steps. First, the system dynamics are decoupled from the state trajectory by using the state transition matrix $M(k)$, so that the future evolution can be expressed algebraically without explicitly simulating time.  Second, the requirement that the first trigger occurs exactly at $k_1 h$ is translated into the constraint set $x(t_q)^\top N(j)x(t_q)\le 0$ for all $j<k_1$ together with $x(t_q)^\top N(k_1)x(t_q)>0$. Third, combining the algebraic evolution with the same logic for the next trigger at $k_2 h$ yields a set of static quadratic inequalities on $x(t_q)$. In this way, the feasibility of the transition relation $k_1 \to k_2$ is transformed into a constraint satisfaction problem.
   \end{remark}
   
    \begin{remark}
    The CSP~\eqref{nonconvex} is challenging because its 
    constraints take quadratic forms such as $x^\top N(k)x > 0$ and 
    $x^\top M(k_1)^\top N(k_2) M(k_1)x > 0$. These inequalities are nonlinear, 
    since their boundaries are quadratic surfaces rather than hyperplanes. 
    Moreover, the matrices $N(k)$ and $M(k_1)^\top N(k_2)M(k_1)$ are indefinite in general. As a result, the feasible sets are nonconvex and bounded by curved regions, whose intersections produce complicated geometries that are difficult to analyze directly.
    \end{remark}

    To overcome these limitations, we convert the original quadratic constraints problem into a linear cone problem~\eqref{eqLP}, whose feasibility of transition relation $k_1 \to k_2$ is equivalent to \eqref{nonconvex}.
    \begin{lemma}\label{lemLP}
    	The transition relation $k_1 \to k_2$ is feasible 
    	if and only if the following linear cone problem admits a solution
    	\begin{equation}\label{eqLP}
    		\begin{aligned}
    			&\exists\,z(t_q)\in\mathbb{R}^{n} \\
    			\text{s.t.}\quad
    			&-d(k_1,\sigma)^\top z(t_q) < 0, \\
    			&\phantom{-}d(j,\sigma)^\top z(t_q)\leq 0,\quad \forall j\in\{1,...,k_1-1\}, \\
    			&-t(k_1,k_2,\sigma)^\top z(t_q) < 0, \\
    			&\phantom{-}t(k_1,j,\sigma)^\top z(t_q)\leq 0,\quad \forall j\in\{1,...,k_{2}-1\}, \\
    			&-e_i z(t_q) \leq 0, \quad \forall i \in \{1,...,n\} .
    		\end{aligned}
    	\end{equation}
    where $z(t_q) = (V^{\top}x(t_q)) \circ (V^{\top}x(t_q))$, $d(k,\sigma) = \operatorname{diag}(D(k))$ and $t(k_1,k_2,\sigma) = \operatorname{diag}(T(k_1,k_2))$, with $D(k) = V^{\top}N(k)V$, $T(k_1,k_2) = V^{\top}M(k_1)^{\top}N(k_2)M(k_1)V$ being diagonal matrices, and $e_i$ denotes the $i$-th standard basis vector in $\mathbb{R}^n$, i.e., $e_i = [0,\dots,1,\dots,0]^\top$ with $1$ at the $i$-th entry. In other words, systems~\eqref{nonconvex} and~\eqref{eqLP} are equivalent in feasibility.
    \end{lemma}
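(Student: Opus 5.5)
The plan is to simultaneously diagonalize every quadratic form in \eqref{nonconvex} using the eigenbasis $V$ of $L$. Each constraint then becomes a linear inequality in the squared modal coordinates. Equivalence in feasibility follows because the map $x\mapsto z$ is onto the nonnegative orthant.

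\emph{Step 1: diagonalization.} Since $L=V\Lambda V^\top$, we have $M(k)=I_n-khL=V(I_n-kh\Lambda)V^\top$, so every $M(k)$ is diagonalized by $V$. Substituting into \eqref{fdiNk} with the block structure of $Q$ gives
\[
N(k)=(1-\sigma\lambda_n^2)M(k)^2-2M(k)+I_n .
\]
This is a polynomial in $M(k)$, hence $D(k)=V^\top N(k)V$ is diagonal. Its $i$-th diagonal entry is
\[
(1-\sigma\lambda_n^2)(1-kh\lambda_i)^2-2(1-kh\lambda_i)+1 .
\]
Likewise, $M(k_1)^\top N(k_2)M(k_1)=M(k_1)N(k_2)M(k_1)$ is a product of commuting matrices, all diagonalized by $V$. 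So $T(k_1,k_2)$ is diagonal with entries $(1-k_1h\lambda_i)^2$ times the corresponding entries of $D(k_2)$. This justifies the claim in the statement that $D(k)$ and $T(k_1,k_2)$ are diagonal.

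\emph{Step 2: linearization.} Set $y=V^\top x(t_q)$. Since $V$ is orthogonal, $x(t_q)=Vy$, and $x\mapsto y$ is a bijection of $\mathbb{R}^n$. For any diagonal $D$,
\[
x^\top V D V^\top x=y^\top D y=\sum_i D_{ii}y_i^2=\operatorname{diag}(D)^\top (y\circ y).
\]
Thus each inequality in \eqref{nonconvex} is exactly the corresponding linear inequality in \eqref{eqLP} with $z=y\circ y$. The strict inequalities $x^\top N(k_1)x>0$ and $x^\top M(k_1)^\top N(k_2)M(k_1)x>0$ become $-d^\top z<0$ and $-t^\top z<0$. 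The nonstrict ones carry over unchanged.

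\emph{Step 3: the two directions.} ($\Rightarrow$) Given a feasible $x(t_q)$ for \eqref{nonconvex}, take $z=(V^\top x)\circ(V^\top x)$. By Step 2 it satisfies all the linear constraints, and $z\ge0$ trivially, so the constraints $-e_i^\top z\le0$ hold. ($\Leftarrow$) Given $z$ feasible for \eqref{eqLP}, the constraints $-e_i^\top z\le 0$ guarantee $z_i\ge0$. Define $y_i=\sqrt{z_i}$ and $x(t_q)=Vy$. Then $(V^\top x)\circ(V^\top x)=z$, and Step 2 shows $x(t_q)$ satisfies \eqref{nonconvex}. By Definition~\ref{Feasible} and the modeling of \eqref{nonconvex}, the transition relation $k_1\to k_2$ is feasible.

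\emph{Main obstacle.} The only delicate point is the reverse direction: one must show the squaring map is surjective onto the orthant $\{z\ge0\}$. This is exactly why the nonnegativity constraints are appended in \eqref{eqLP}. Without them, a $z$ with negative entries could satisfy the linear inequalities without corresponding to any real state. Everything else, including the commutation of all matrices with $L$, is routine algebra.
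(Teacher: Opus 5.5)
The paper gives no proof of Lemma~\ref{lemLP}. It only asserts, inside the statement, that $D(k)$ and $T(k_1,k_2)$ are diagonal, and it claims exactness of the reformulation in the remark that follows. So there is no written argument to compare against. Your proof is correct and is the argument the construction presupposes.

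Your three ingredients each do the needed work:
\begin{itemize}
\item Simultaneous diagonalization of $M(k)$, $N(k)=(1-\sigma\lambda_n^2)M(k)^2-2M(k)+I_n$ and $M(k_1)N(k_2)M(k_1)$ by $V$. This also justifies the diagonality the paper takes for granted.
\item The substitution $z=(V^\top x)\circ(V^\top x)$. It turns each quadratic constraint into the corresponding linear one and preserves strictness.
\item Surjectivity of coordinatewise squaring onto the nonnegative orthant. This is exactly what the constraints $-e_i^\top z\le 0$ are for.
\end{itemize}

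One caveat is inherited from the paper rather than introduced by you. The last sentence of your Step~3 identifies Definition~\ref{Feasible} with \eqref{nonconvex}. That identification is exact when $k_1,k_2<\bar k$, but not in general otherwise. The reason is that $\mathcal{K}$ also returns $\bar k$ through the forced trigger when $x^\top N(\bar k)x\le 0$. So the strict constraint involving $N(k_1)$ (resp.\ $M(k_1)^\top N(k_2)M(k_1)$) should be dropped when $k_1=\bar k$ (resp.\ $k_2=\bar k$).

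Consensus states show the mismatch concretely. For $x\in\operatorname{span}(\mathbf{1})$ we have $M(k)x=x$ and $x^\top N(k)x=-\sigma\lambda_n^2\|x\|^2<0$ for every $k$. Hence such states realize $\bar k\to\bar k$ under Definition~\ref{Feasible}. Yet \eqref{nonconvex} for $(\bar k,\bar k)$ is infeasible in the paper's own example ($h=0.06$, $\sigma=0.0469$, $\bar k=20$).

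What you have rigorously established is the ``in other words'' equivalence of \eqref{nonconvex} and \eqref{eqLP}. The lemma's first sentence then holds only under the paper's convention, stated after \eqref{nonconvex}, that feasibility of a transition means feasibility of \eqref{nonconvex}.
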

    \begin{remark}
    Unlike the SDR approach~\cite{Manuel2020Scalable}, which lifts the quadratic constraints to $X=xx^\top$ and replaces strict inequalities with non-strict ones while dropping the rank-one condition, our method reformulates \eqref{nonconvex} exactly as a linear cone problem. This preserves the original strict inequalities and ensures full equivalence with the quadratic formulation. As a result, our approach avoids the conservatism inherent in SDR and provides a clearer geometric description of the feasible region, making subsequent analysis more reliable. 
\end{remark}
\begin{figure}[htpb]
    	\centering
    	\includegraphics[width=1.5in]{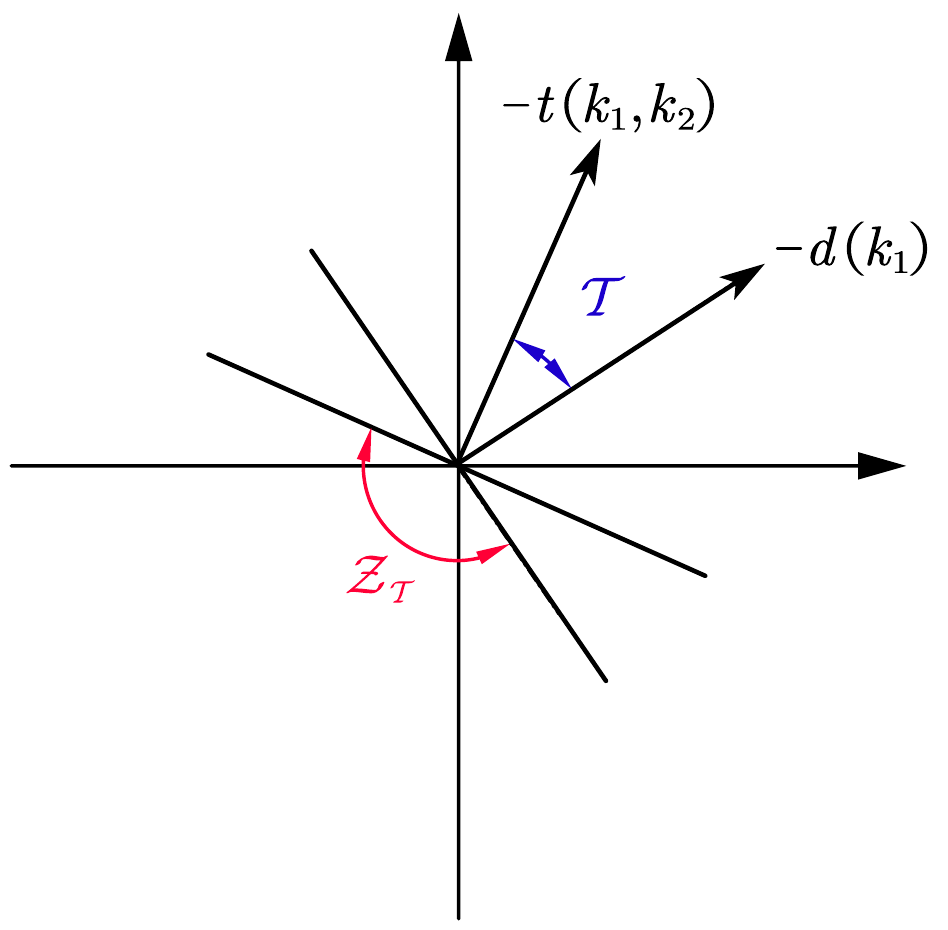}
    	\caption{Visualization of the Cone $\mathcal{T}$ and Cone $\mathcal{Z}_\mathcal{T}$}
    	\label{ConeVisual}
    \end{figure}
\begin{remark}
      To illustrate why Lemma~\ref{lemLP} leads to a cone problem, consider the example in Fig.~\ref{ConeVisual}. The two strict inequality constraints $-d(k_1)^\top z(t_q)<0$ and $-t(k_1,k_2)^\top z(t_q)<0$ generate an open cone $\mathcal{T}$ spanned by the vectors $-d(k_1)$ and $-t(k_1,k_2)$. The feasible states $z(t_q)$ must lie in the polar cone $\mathcal{Z}_{\mathcal{T}}$, i.e., all directions forming strictly negative inner products with vectors in $\mathcal{T}$. 
  \end{remark}

\subsection{A Computable Condition for the Feasibility of Triggering Behaviors: A Geometric-Algebraic Perspective}
   
    Based on Lemma \ref{lemLP}, we now establish a rigorous 
    criterion that unifies the geometric and algebraic perspectives. This leads to the following theorem, which provides necessary and sufficient conditions for the feasibility of $k_1 \to k_2$.
   
   \begin{theorem}[Feasibility condition of Transition Relation]\label{thm:dual-full}
   	The transition relation $k_1 \to k_2$ is \textbf{infeasible} if and only if there exists non-negative multipliers $s_1$, $s_2$, $\{\mu_{i}\}_{i=1}^{k_1-1}$, $\{\nu_{i}\}_{i=1}^{k_2-1}$, $\{\eta_i\}_{i=1}^{n}$, with  $s_1 + s_2 > 0$ such that
   	\begin{equation}\label{dualequality}
   		\begin{aligned}
   			&-s_1\,d(k_{1},\sigma)
   			-s_2\,t(k_{1},k_{2},\sigma)\\
   			&+\sum_{i=1}^{k_{1}-1}\mu_{i}\,d(i,\sigma)
   			+\sum_{i=1}^{k_{2}-1}\nu_{i}\,t(k_{1},i,\sigma)
   			-\sum_{i=1}^{n}\eta_i\,e_i = 0.
   		\end{aligned}
   	\end{equation} 
   	where $d(k,\sigma)=\text{diag}(D(k)) \text{ and } t(k_1,k_2,\sigma)=\text{diag}(T(k_1,$ $k_2))$.
   \end{theorem}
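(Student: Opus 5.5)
By Lemma~\ref{lemLP}, the transition $k_1\to k_2$ is feasible exactly when the linear system \eqref{eqLP} in $z=z(t_q)$ has a solution. The plan is therefore to apply a theorem of the alternative for mixed strict/non-strict homogeneous linear inequalities, namely Motzkin's transposition theorem. We collect the data into two matrices. The strict rows are $a_1=-d(k_1,\sigma)$ and $a_2=-t(k_1,k_2,\sigma)$. The non-strict rows are $b_j=d(j,\sigma)$ for $j<k_1$, then $c_j=t(k_1,j,\sigma)$ for $j<k_2$, then $-e_i$ for $i=1,\dots,n$. The system \eqref{eqLP} then reads $A z<0$ componentwise and $Bz\le 0$, with $A$ having rows $a_1^\top,a_2^\top$ and $B$ having the remaining rows.

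Motzkin's theorem states that exactly one of the following holds: either there is $z$ with $Az<0$ and $Bz\le0$, or there are $s\ge0$ with $s\neq0$ and $w\ge0$ such that $A^\top s+B^\top w=0$. Writing $s=(s_1,s_2)$ and $w=(\mu,\nu,\eta)$, the identity $A^\top s+B^\top w=0$ is exactly \eqref{dualequality}, and $s\ge 0,\ s\ne0$ is equivalent to $s_1+s_2>0$. So infeasibility of \eqref{eqLP}, and hence of the transition by Lemma~\ref{lemLP}, is equivalent to the existence of the stated multipliers.

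The argument should be made self-contained in its two directions.

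\emph{Sufficiency (multipliers imply infeasibility).} Suppose such multipliers exist and a feasible $z$ also exists. Take the inner product of \eqref{dualequality} with $z$. The terms $s_1(-d(k_1)^\top z)$ and $s_2(-t(k_1,k_2)^\top z)$ are $\le 0$, and at least one is $<0$ because $s_1+s_2>0$. All remaining terms are $\le0$ by the non-strict constraints and the nonnegativity of $\mu,\nu,\eta$. Hence the sum is strictly negative, contradicting that it equals $0$.

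\emph{Necessity (infeasibility implies multipliers).} This is the main obstacle, since it requires a separation argument.

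1. Reduce the strict constraints to a single one. Feasibility of \eqref{eqLP} is equivalent to the existence of $z$ with $Bz\le0$ and $a_1^\top z<0$, $a_2^\top z<0$.

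2. Homogenize with a slack variable. This is equivalent to the LP $\max\{\delta : a_\ell^\top z+\delta\le 0\ (\ell=1,2),\ Bz\le0\}$ being unbounded or having a positive value. Since the constraint set is a cone, positivity means the value is $+\infty$.

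3. Infeasibility therefore means that every $(z,\delta)$ satisfying $a_\ell^\top z+\delta\le0$ and $Bz\le0$ has $\delta\le0$.

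4. Apply Farkas' lemma to this implication. It gives nonnegative multipliers $s_1,s_2,w$ with $\sum_\ell s_\ell a_\ell+B^\top w=0$ from the $z$-components, and $s_1+s_2=1$ from the $\delta$-component. This is \eqref{dualequality} with $s_1+s_2>0$.

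Farkas' lemma applies here because the cone generated by the finitely many rows is finitely generated and hence closed. This closedness is the one point that needs care, and it is automatic for finitely many linear constraints. No Slater-type qualification is needed.
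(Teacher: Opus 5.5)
Your proof is correct. The sufficiency half is the same as the paper's: pair the identity with a feasible $z$, and $s_1+s_2>0$ makes the sum strictly negative. For necessity you take a genuinely different route.

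\textbf{The paper's route.} It works in $z$-space and proceeds in four steps:
\begin{itemize}
\item it separates the open cone $\mathcal{Z}_{\mathcal{T}}$ of the strict constraints from the closed cone $\mathcal{Z}_{\mathcal{C}}$ of the non-strict ones via Hahn--Banach;
\item it shows the offset $\alpha$ is $0$ by testing at the origin and by scaling;
\item it uses the cone-duality identities $\mathcal{Z}_{\mathcal{C}}^*=-\mathcal{C}$ and $\mathcal{Z}_{\mathcal{T}}^*=-\operatorname{cl}(\mathcal{T})$ to write the normal $\theta$ both as an element of $-\mathcal{C}$ and as an element of $\operatorname{cl}(\mathcal{T})$;
\item it equates the two representations to get the identity, and then argues $s_1+s_2>0$ (the clean reason being $\theta\neq 0$).
\end{itemize}

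\textbf{Your route.} You lift to $(z,\delta)$, collapse the two strict inequalities into the single requirement $\delta>0$, and apply homogeneous Farkas once. In effect you prove Motzkin's transposition theorem for this instance.

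\textbf{What each buys.} The paper's argument gives the geometric picture: the certificate is literally the normal of a hyperplane separating two cones. This is what its figures and the $\sigma$-dependent cone visualizations exploit.

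Your argument buys economy and robustness:
\begin{itemize}
\item The only analytic input is closedness of a finitely generated cone, which is the same fact the paper invokes for $(\mathcal{C}^\circ)^*=-\mathcal{C}$.
\item The normalization $s_1+s_2=1$ falls out of the $\delta$-component directly.
\item You never need $\mathcal{Z}_{\mathcal{T}}\neq\emptyset$. The separation theorem presupposes this, and it fails, for example, when $d(k_1,\sigma)$ and $t(k_1,k_2,\sigma)$ are negatively proportional; your argument covers that case automatically.
\item You also avoid the identification $\mathcal{Z}_{\mathcal{T}}=\operatorname{int}(\mathcal{T}^\circ)$ that the paper's dual-cone formula relies on.
\end{itemize}

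One minor wording point: your Step~1 only restates the system. The actual reduction to a single strict condition happens in Step~2.
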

   
   \begin{proof}\textbf{Necessity.} The set of states $ z(t_q)$ 
   	that satisfy the strict triggering constraints $-d(k_1)^\top z(t_q) < 0$ and $-t(k_1,k_2)^\top z(t_q) < 0$ is given by
   	\begin{align}\label{Zt}
   		\mathcal{Z}_{\mathcal{T}} := 
   		\left\{ z \in \mathbb{R}^n \,\middle|\,
   		\langle z, v \rangle < 0,\ \forall v \in \mathcal{T}
   		\right\},
   	\end{align}
   	where $\mathcal{T} \in \mathbb{R}^n$ denotes the open convex cone generated by the two strict triggering constraints, defined as
   	\begin{align}\label{T}
   		\mathcal{T} := \left\{ -s_1\,d(k_1,\sigma) - s_2\,t(k_1,k_2,\sigma)\ \middle| s_1, s_2 > 0 \right\}.
   	\end{align}
    with positive scalars $s_1$ and $s_2$.
    
    Similarly, we define the set of states that satisfy all non-strict inequality constraints as
   	\begin{align}
   		\mathcal{Z}_{\mathcal{C}} := 
   		\left\{ z \in \mathbb{R}^n \,\middle|\,
   		\langle z, u \rangle \leq 0,\ \forall u \in \mathcal{C}
   		\right\}
   	\end{align}
   	where $\mathcal{C} \in \mathbb{R}^n$ denotes the closed convex cone generated by all non-strict constraints
   	\begin{equation}\label{Cset}
   		   	\begin{aligned}
   			\mathcal{C} := \Big\{ 
   			\sum_{i=1}^{k_{1}-1}\mu_{i}\,d(i,\sigma)
   			+\sum_{i=1}^{k_{2}-1}\nu_{i}\,t(k_{1},i,\sigma)\\
   			-\sum_{i=1}^{n}\eta_i\,e_i
   			\ \Big|\
   			\mu_{i}, \nu_{i}, \eta_i \geq 0
   			\Big\},
   		\end{aligned}
   	\end{equation}
   	and $e_i$ denotes the $i$-th standard basis vector in $\mathbb{R}^n$, i.e., $e_i = [0,\dots,1,\dots,0]^\top$ with $1$ at the $i$-th entry. The non-negative scalars $\mu_{k_1'}, \nu_{k_2'}, \eta_i$ are conic multipliers associated with non-strict triggering constraints.
   	
   	If~\eqref{eqLP} were feasible, then there would exist a vector $z(t_q)$ such that all inner products with vectors in $\mathcal C$ are non-positive, and with those in $\mathcal T$ strictly negative. This implies that the intersection of the corresponding feasible regions is non-empty, i.e., $\mathcal{Z}_{\mathcal{T}} \cap \mathcal{Z}_{\mathcal{C}} \neq \emptyset$.  
   	Conversely, if~\eqref{eqLP} is infeasible, we must have
    \begin{equation} \label{g2acond}
        \mathcal{Z}_{\mathcal{T}} \cap \mathcal{Z}_{\mathcal{C}} = \emptyset. 
    \end{equation}
    Since $\mathcal{Z}_{\mathcal{C}}$ is a closed convex cone and $\mathcal{Z}_{\mathcal{T}}$ is an open convex cone, which guarantees the existence of a separating hyperplane \cite[Thm~2.4]{kesavan2017hahn}. In our setting, we take the normed space to be $V = \mathbb{R}^n$, with $A = \mathcal{Z}_{\mathcal{T}}$ (the open convex cone) and $B = \mathcal{Z}_{\mathcal{C}}$ (the closed convex cone).
   	Since $\mathbb{R}^n$ is finite-dimensional, its topological dual space $V^* = (\mathbb{R}^n)^*$ is isomorphic to $\mathbb{R}^n$. Consequently, any continuous linear functional $f \in V^*$ can be written as an inner product with a fixed vector $\theta \in \mathbb{R}^n$, i.e., $f(x) = \langle \theta, x \rangle$ for all $x \in \mathbb{R}^n$.
   	  		  	
   	Evaluating the separating inequality at $u = 0 \in \mathcal{Z}_{\mathcal{C}}$ yields $\alpha \leq \langle \theta, 0 \rangle = 0$, implying $\alpha \leq 0$. On the other hand, although $\mathcal{Z}_{\mathcal{T}}$ is an open cone and thus does not contain the origin, its closure $\operatorname{cl}(\mathcal{Z}_{\mathcal{T}})$ does. For any $v \in \mathcal{Z}_{\mathcal{T}}$, the scaled vector $\lambda v$ belongs to $\mathcal{Z}_{\mathcal{T}}$ for all $\lambda > 0$, and $\lim_{\lambda \to 0^+} \lambda v = 0 \in \operatorname{cl}(\mathcal{Z}_{\mathcal{T}})$. Taking the limit in the inequality $\langle \theta, \lambda v \rangle \le \alpha$ yields $\lim_{\lambda \to 0^+} \langle \theta, \lambda v \rangle = 0$,
	which implies $\alpha \ge 0$.
   	
   	Combining both bounds, $\alpha \le 0$ and $\alpha \ge 0$, we conclude that $\alpha = 0$. Hence, the separation condition $f(x) \le \alpha \le f(y),\, \forall x \in A,\ y \in B$ becomes
   	\begin{align}\label{sepration}
   		\langle \theta, u \rangle \geq 0 \quad \forall u \in \mathcal{Z}_{\mathcal{C}}, \qquad
   		\langle \theta, v \rangle \leq 0 \quad \forall v \in \mathcal{Z}_{\mathcal{T}}.
   	\end{align}

    	Assume $ \mathcal{Z}_{\mathcal{T}} \cap \mathcal{Z}_{\mathcal{C}} = \emptyset$, there exists $\theta \setminus \{0\}$ satisfying the above. From the separation inequality $\langle \theta, u \rangle \geq 0, \,  \forall u \in \mathcal{Z}_{\mathcal{C}}$ in \eqref{sepration}, we have $\theta \in \mathcal{Z}_{\mathcal{C}}^*$. Since $\mathcal{Z}_{\mathcal{C}}=C^\circ$ and $\mathcal C$ is a closed convex cone, we have $\mathcal{Z}_{\mathcal{C}}^*=(C^\circ)^*=-C$. Therefor, $\theta \in -C$, which can be written as
    \begin{equation}\label{eq:neg_theta_derivation1}
    	   	\begin{aligned}
    		\theta \in -\Big\{ 
    		\sum_{i=1}^{k_{1}-1}\mu_{i}\,d(i,\sigma)
   			+\sum_{i=1}^{k_{2}-1}\nu_{i}\,t(k_{1},i,\sigma)\\
   			-\sum_{i=1}^{n}\eta_i\,e_i
   			\ \Big|\
   			\mu_{i}, \nu_{i}, \eta_i \geq 0
    		\Big\}
    	\end{aligned}
    \end{equation}
   	
   	Similarly, from the separation inequality $\langle \theta, v \rangle \leq 0 ,\, \forall v \in \mathcal{Z}_{\mathcal{T}}$, we equivalently have $\langle -\theta, v \rangle \geq 0 ,\, \forall v \in \mathcal{Z}_{\mathcal{T}}$, which implies that $-\theta \in \mathcal{Z}_{\mathcal{T}}^*$. Since $\mathcal{Z}_{\mathcal{T}}$ is open convex set and $\mathcal{Z}_{\mathcal{T}}=\operatorname{int}(\mathcal{T}^\circ)$, its dual cone $\mathcal{Z}_{\mathcal{T}}^*=(\operatorname{int}(\mathcal{T}^\circ))^*=-\operatorname{cl}(\mathcal{T})$ is a closed convex cone. The condition $-\theta \in -\operatorname{cl}(\mathcal{T})$ can thus be expressed algebraically as:
   	\begin{equation}\label{eq:neg_theta_derivation}
   		-\theta \in -\{s_1 (-d(k_1,\sigma)) + s_2 (-t(k_1,k_2,\sigma)) \ \Big|\ s_1, s_2 \ge 0 \}.
   	\end{equation}
   	 
   	If $s_1=s_2=0$ in \eqref{eq:neg_theta_derivation}, then $v=0 \in \mathcal{T}$, which violates the strict inequality
   	$\langle z,v\rangle<0,\ \forall v \in \mathcal{T}$ holds by \eqref{Zt}.  Hence at least one of $s_1,s_2$ must be
   	strictly positive, implying that
   	\begin{align}\label{s1s2}
   		s_1+s_2>0.
   	\end{align}
   	
   	Combining \eqref{eq:neg_theta_derivation1} and \eqref{eq:neg_theta_derivation} yields $s_1$, $s_2$, $\{\mu_{i}\}_{i=1}^{k_1-1}$, $\{\nu_{i}\}_{i=1}^{k_2-1}$, $\{\eta_i\}_{i=1}^{n}$ with all components non-negative and
   	$s_1+s_2>0$ satisfying the vector identity
   	\begin{equation}\label{mutp}
   		\begin{aligned}
   			 &-s_1\,d(k_{1},\sigma)
   			-s_2\,t(k_{1},k_{2},\sigma)\\
   			&+\sum_{i=1}^{k_{1}-1}\mu_{i}\,d(i,\sigma)
   			+\sum_{i=1}^{k_{2}-1}\nu_{i}\,t(k_{1},i,\sigma)
   			-\sum_{i=1}^{n}\eta_i\,e_i = 0,
   		\end{aligned}
   	\end{equation}
   	thereby completing the necessary part of the theorem.	 	
   	
   	\textbf{Sufficiency.} Suppose that there exists a non-negative multipliers $s_1$, $s_2$, $\{\mu_{i}\}_{i=1}^{k_1-1}$, $\{\nu_{i}\}_{i=1}^{k_2-1}$, $\{\eta_i\}_{i=1}^{n}$, with $s_1 + s_2 > 0$, such that the vector identity \eqref{mutp} holds. For contradiction, assume that the system \eqref{eqLP} is also feasible, which means there exists a vector $z$ satisfying all inequalities.
   	
   	We take the inner product of both sides of the identity \eqref{mutp} with the solution $z$ yields
   	$\bigl\langle z,\,-s_1\,d(k_{1},\sigma)-s_2\,t(k_{1},k_{2},\sigma)+\sum_{i=1}^{k_{1}-1}\mu_{i}\,d(i,\sigma)
   	+\sum_{i=1}^{k_{2}-1}\nu_{i}\,t(k_{1},i,\sigma)-\sum_{i=1}^{n}\eta_i\,e_i\bigr\rangle=0.$ Since $s_1+s_2>0$,
   	the whole expression is strictly negative,
   	contradicting equality to zero.  
   	Hence, no such $z$ can exist and \eqref{eqLP} is infeasible.
   	
   	This contradiction shows that our assumption that both system \eqref{eqLP} and system \eqref{dualequality} can be feasible simultaneously must be false.
   	Thus, if \eqref{dualequality} is feasible, system \eqref{eqLP} must be infeasible. This completes the sufficiency part.
   	\end{proof}  

    \begin{remark}
         The geometric condition $\mathcal{Z}_{\mathcal{T}} \cap \mathcal{Z}_{\mathcal{C}} = \emptyset$ defined by \eqref{g2acond},  shows that the feasibility of a transition relation $k_1 \to k_2$ is equivalent to the existence of an intersection between $\mathcal{Z}_{\mathcal{C}}$ and $\mathcal{Z}_{\mathcal{T}}$. Fig.~\ref{ConeFeas} further visualizes the relation between $\mathcal{Z}_\mathcal{C}$ and $\mathcal{Z}_\mathcal{T}$. In the left panel, the two cones do not intersect, so the transition $k_1 \to k_2$ is infeasible.  In the right panel, the cones intersect, meaning that a feasible state $z(t_q)$ exists and the transition $k_1 \to k_2$ is feasible, and the overlapping region corresponds to the feasible set of $z(t_q)$.
    \end{remark}
    \begin{figure}[htpb]
    	\centering
    	\includegraphics[width=3.5in]{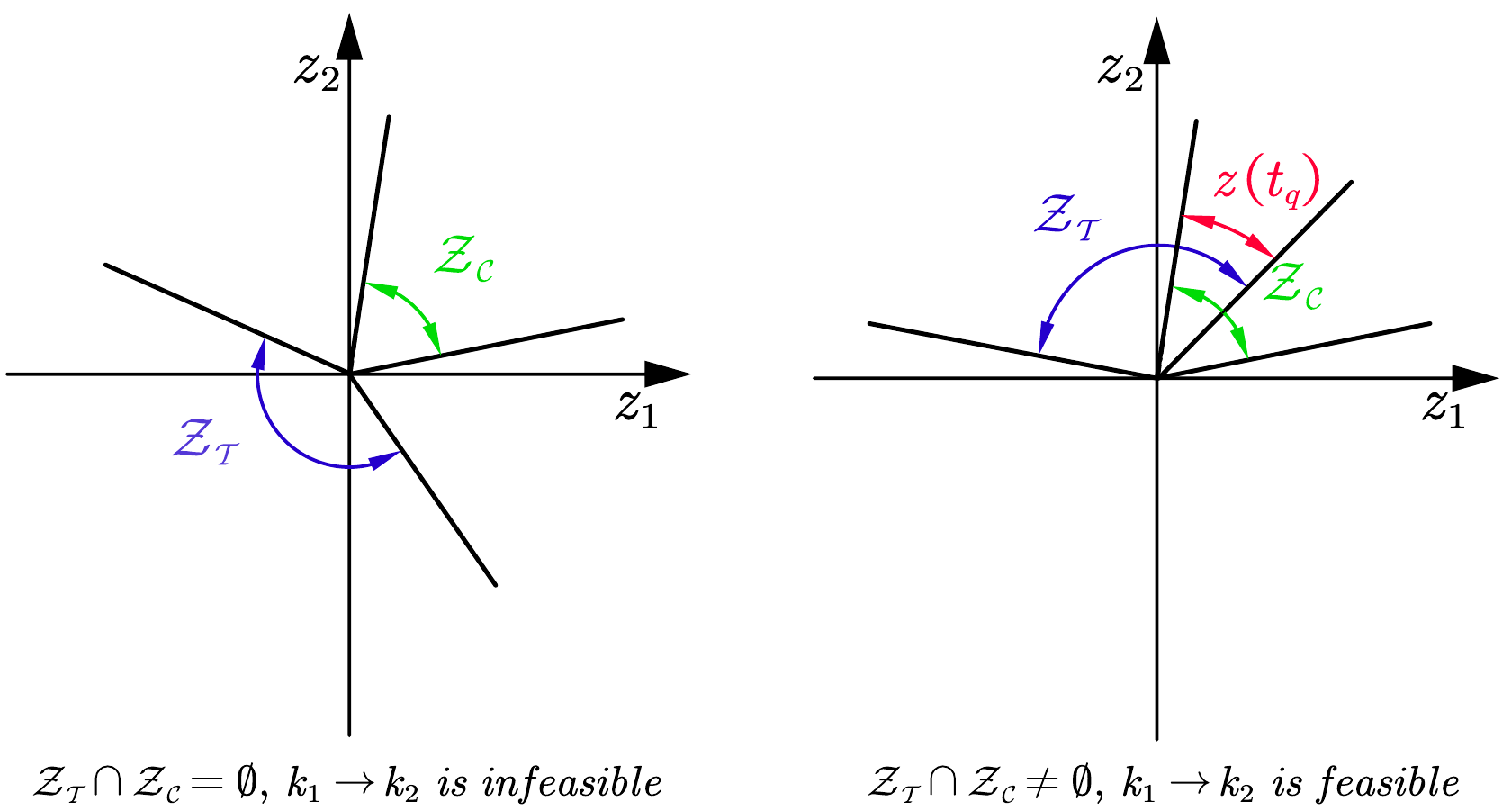}
    	\caption{Geometric visualization of Feasibility Condition}
    	\label{ConeFeas}
    \end{figure}
    
    \begin{remark}
    Fig.~\ref{Seperat} illustrates the connection between the geometric and algebraic views of infeasibility. On the left, the cones $\mathcal{Z}_{\mathcal{C}}$ and $\mathcal{Z}_{\mathcal{T}}$ are disjoint, which geometrically certifies that the transition relation is infeasible. On the right, this separation is expressed
    algebraically by a hyperplane, whose normal vector corresponds to nonnegative multipliers satisfying \eqref{dualequality}. Thus, the figure makes explicit how the existence of such multipliers provides the algebraic counterpart to the geometric separation condition.
    \end{remark}

    \begin{figure}[htpb]
    	\centering
    	\includegraphics[width=3.5in]{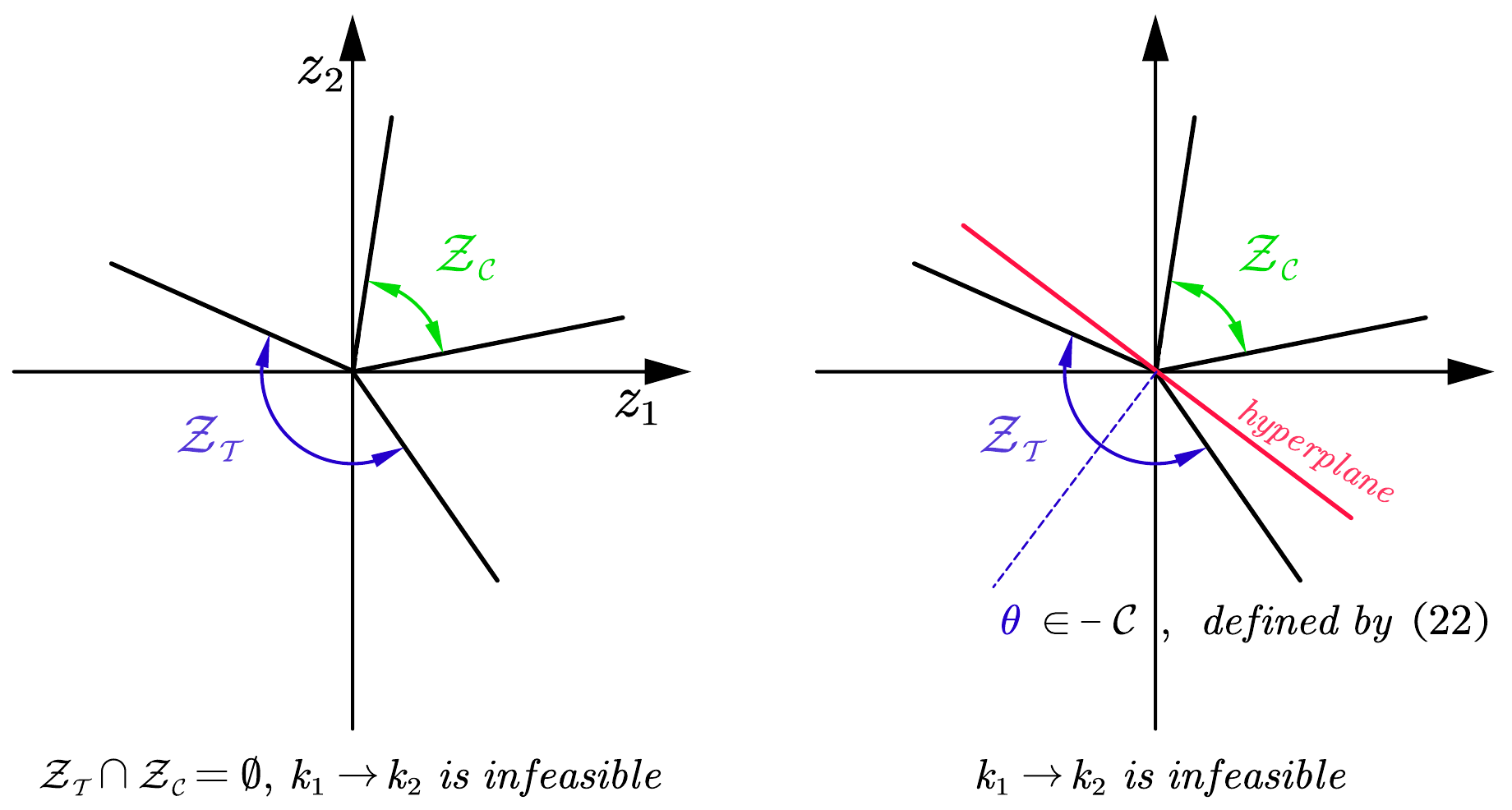}
    	\caption{Geometric-to-Algebraic Transformation of Feasibility Condition}
    	\label{Seperat}
    \end{figure}

    \begin{remark}
        While Theorem~\ref{thm:dual-full} provides a geometric condition for feasibility, treating the control parameter~$\sigma$ as a variable turns this condition into a powerful analytical tool. The geometric perspective reveals how the cones $\mathcal{Z}_\mathcal{C}$ and $\mathcal{Z}_\mathcal{T}$ vary as $\sigma$ varies, thereby offering direct insight into how feasibility regions of triggering behaviors emerge or disappear.
    \end{remark}

To identify all feasible triggering behaviors, we construct the candidate transition relations $\mathcal{D}$ defined in \eqref{cl} and retain only those that satisfy the condition in~\eqref{dualequality}. The procedure is summarized in Algorithm~\ref{alg:offline} and {\textit{\bf Problem 2}} is solved.
    \begin{algorithm}[!h]
            \caption{Construction of $\mathcal{B}$}
            \label{alg:offline}
            \renewcommand{\algorithmicrequire}{\textbf{Input:}}
            \renewcommand{\algorithmicensure}{\textbf{Output:}}
            
            \begin{algorithmic}[1]
            \REQUIRE $h$, $\bar k$ , $\mathcal{D}$ 
            \ENSURE $\mathcal{B}$
            
            \STATE Generate the candidate sequences sets $\mathcal{D}$ using \eqref{cl}
            \STATE Initialize the IET transition relations set: $\mathcal{B} \gets \emptyset$
            
            \FOR{each candidate $(k_1, k_2) \in \mathcal{D}$}
                \IF{the linear cone problem \eqref{dualequality} has no solution associated with $(k_1 , k_2)$}
                    \STATE $\mathcal{B} \gets \mathcal{B} \cup \{(k_1, k_2)\}$
                \ENDIF
            \ENDFOR
    
        \RETURN $\mathcal{B}$
        \end{algorithmic}
    \end{algorithm}

\section{Numerical Results}	
 Consider the system described in~\eqref{eq:linearSys} under the event-triggering condition given by~\eqref{triggertime}. While our theoretical results hold for a general N-agent system, visualizing the impact of the control parameter $\sigma$ on the feasibility of triggering behaviors is most instructive in a low-dimensional space. To this end, we select a two-agent system, whose communication topology is captured by the corresponding Laplacian matrix $L=\begin{bmatrix}
     1&-1\\
    -1& 1
 \end{bmatrix}$. This setup reduces the dimensionality of the problem to $n=2$, enabling a direct geometric interpretation and visualization of how the cones $\mathcal{Z}_{\mathcal{C}}$ and $\mathcal{Z}_{\mathcal{T}}$ vary as $\sigma$ is varied. Then, we have $\lambda_n = 2$, which yields the admissible parameter ranges $h \in (0, 0.25]$ and $\sigma \in (0, 0.25]$ according to the stability condition $0 < h \leq \frac{1}{2 \lambda_{n}}$ and $0 < \sigma <\frac{1}{ \lambda_{n}^2} $ given by \cite{Tongwen2013Event}. 

 We consider the case with sampling period $h=0.06$, control parameter $\sigma=0.0469$, and maximum IET $\bar{k}=20$. Using Algorithm~\ref{alg:offline}, all feasible triggering behaviors are obtained, as illustrated in Fig.~\ref{Behaviors}(a). In the figure, the horizontal axis represents the current triggering IET, while the vertical axis indicates the possible IETs for the next trigger. For example, if the current triggering interval is $5h$, then the next IET may take values $9h$, $10h$, $11h$, $12h$, or $13h$, corresponding to the transition relations $5 \to 9$, $5 \to 10$, $5 \to 11$, $5 \to 12$, $5 \to 13$.

   \begin{figure}[thpb]
	\centering
	\begin{minipage}{0.23\textwidth}
		\centering
		\includegraphics[width=1.6in]{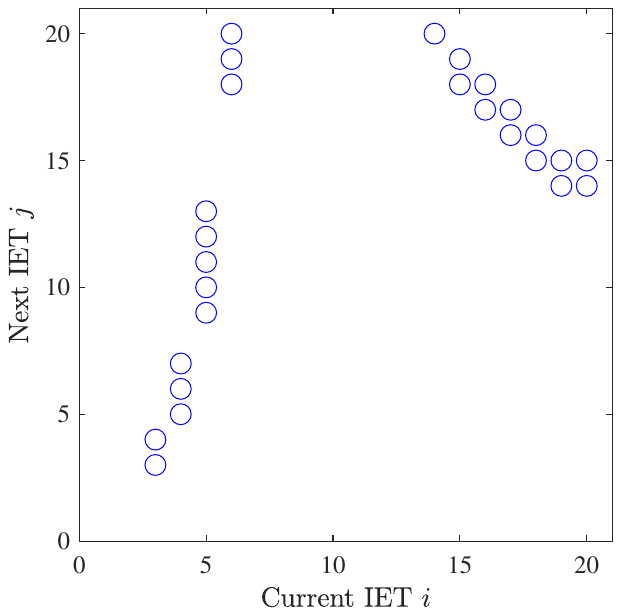}
		\text{(a) $\sigma=0.0469$. }
	\end{minipage}
	\begin{minipage}{0.23\textwidth}
		\centering
		\includegraphics[width=1.6in]{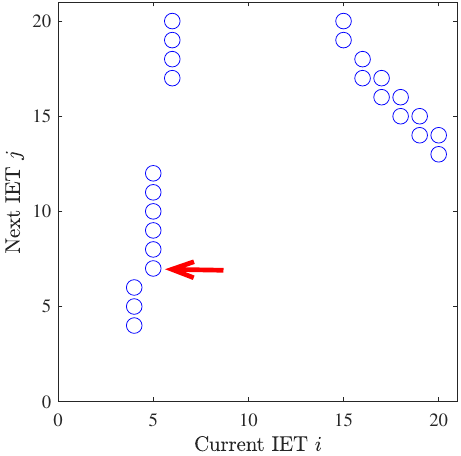}
		\text{(b) $\sigma=0.1314$.}
	\end{minipage}
	\caption{Triggering Behaviors with different $\sigma$.}
	\label{Behaviors}
\end{figure}

It can be observed that at $\sigma=0.0469$ the transition relation $5 \to 7$ does not exist. To further examine how this transition relation is affected by $\sigma$, and how $\sigma$ can be tuned to enable or disable it, we exploit the geometric condition \eqref{g2acond} provided in Theorem~\ref{thm:dual-full}. Specifically, the feasibility of $5 \to 7$ is determined by whether the intersection $\mathcal{Z}_{\mathcal{T}} \cap \mathcal{Z}_{\mathcal{C}}$ is nonempty, and the evolution of this intersection with respect to $\sigma$ reveals the behavior. 
To illustrate this, we select two values of $\sigma$ 
($0.0469$, and $0.1314$) and plot the corresponding feasible state space, as shown in Fig.~\ref{ConeVisualCap}. To further validate this result, we apply Algorithm~\ref{alg:offline} again with $\sigma=0.1314$ to construct all triggering behaviors, as shown in Fig.~\ref{Behaviors}(b), where the transition $5 \to 7$ indeed appears, which is highlighted by the red arrow.

 	\begin{figure}[thpb]
	\centering
	\begin{minipage}{0.23\textwidth}
		\centering
		\includegraphics[width=1.6in]{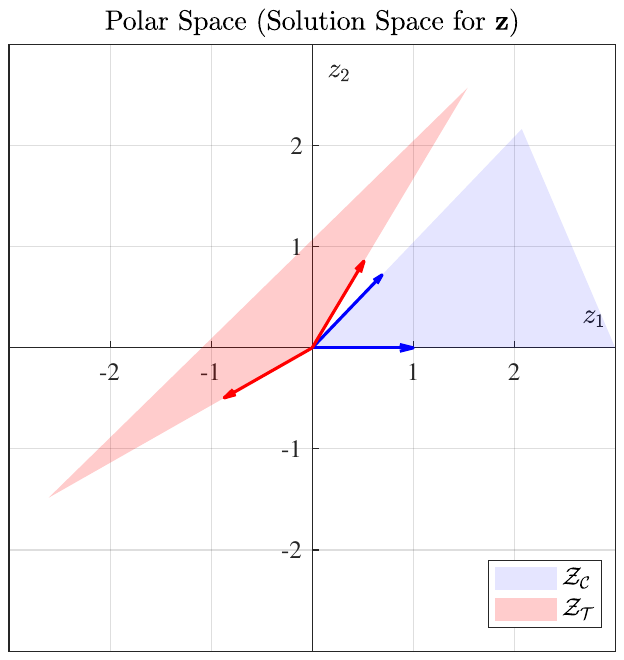}
		\text{(a) $\sigma=0.0469$.}
	\end{minipage}
    \begin{minipage}{0.23\textwidth}
		\centering
		\includegraphics[width=1.6in]{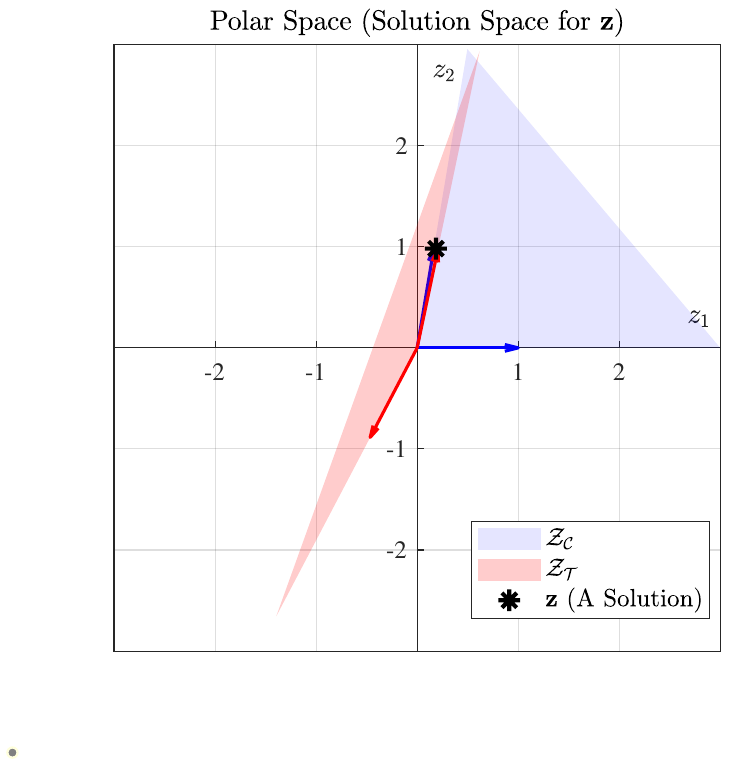}
		\text{(b) $\sigma=0.1314$.}
	\end{minipage}
	\caption{Feasible state space of transition relation $5 \to 7$.}
	\label{ConeVisualCap}
\end{figure}

Fig.~\ref{ConeVisualCap} shows how the feasibility of the transition relation $5 \to 7$ depends on $\sigma$. For $\sigma=0.0469$, the cones $\mathcal{Z}_{\mathcal{T}}$ and $\mathcal{Z}_{\mathcal{C}}$ are disjoint and the feasible state space does not exist, so the transition is infeasible. At $\sigma=0.1314$, the cones intersect, and their overlapping region defines the feasible state space, indicating that the transition relation is feasible. This observation is consistent with the simulation results in Fig.~\ref{Behaviors}(b), and highlights the critical role of $\sigma$ in enabling or disabling a specific transition relation.

\section{Conclusion}
In this paper, we characterize the triggering behaviors of event-triggered 
control systems from a geometric–algebraic perspective. We first show that the feasibility of an IET transition relation can be formulated as a nonconvex quadratic CSP. This problem is then reformulated as an equivalent linear cone problem, allowing the feasibility of the transition relation to be captured by the intersection of cones. Building on this result, we establish an equivalent algebraic condition for the existence of any IET transition relation, and develop an algorithm to compute the complete set of all feasible transition relations. Numerical simulations confirm that the proposed method not only matches observed triggering behaviors but also offers an intuitive tool for parameter analysis. This work thus lays a rigorous foundation for the exact characterization and design of triggering behaviors in event-triggered control systems.

\bibliographystyle{IEEEtran}
\bibliography{refs.bib}

\end{document}